\def\beq{\begin{equation}}
\def\ee{\end{equation}}

\documentclass[11pt, reqno]{amsart}
\usepackage{amsmath}
\usepackage{amscd}
\usepackage{amsthm}
\usepackage{amssymb} \usepackage{latexsym}
\usepackage{eufrak}
\usepackage{euscript}
\usepackage{epsfig}
\usepackage{graphics}
\usepackage{array}
\usepackage{enumerate}
\usepackage{dsfont}
\usepackage{color}
\usepackage{wasysym}
\usepackage{braket}
\usepackage{pmat}
\usepackage{pdfsync}

\newcommand{\bel}[1]{\begin{equation}\label{#1}}

\newcommand{\ba}{\begin{eqnarray}}
\newcommand{\ea}{\end{eqnarray}}

\newcommand{\qe}{\end{equation}}
\newcommand{\R}{{\mathbb R}}
\newcommand{\N}{{\mathbb N}}

\newcommand{\C}{{\mathbb C}}
\newcommand{\BH}{\mathcal{H}}

\newcommand{\al}{{\alpha}}

\theoremstyle{theorem}
\newtheorem{thm}{Theorem}[section]

\theoremstyle{example}
\newtheorem{example}{Example}[section]
\theoremstyle{corollary}

\theoremstyle{lemma}

\theoremstyle{definition}

\theoremstyle{Assumption}
\newtheorem{assumption}{Assumption}[section]
\theoremstyle{proof}

\theoremstyle{remark}
\newtheorem{rem}{Remark}[section]

\begin{document}

\title[Schmidt-correlated states, weak Schmidt decomposition and generalized Bell basis]{Schmidt-correlated states,
weak Schmidt decomposition and generalized Bell bases related to Hadamard matrices}

\author{Bobo Hua}
\address{Max Planck Institute for Mathematics in the Sciences, 04103 Leipzig,
 Germany, and School of Mathematical Sciences, LMNS, Fudan University, Shanghai 200433, China}
\email{bobohua@fudan.edu.cn}
\author{Shaoming Fei}
\address{Max Planck Institute for Mathematics in the Sciences, 04103 Leipzig,
Germany, and School of Mathematical Sciences, Capital Normal
University, Beijing 100048, China} \email{feishm@cnu.edu.cn}
\author{J\"urgen Jost}
\address{Max Planck Institute for Mathematics in the Sciences, 04103 Leipzig,
 Germany,
Department of Mathematics and Computer Science, University of
Leipzig, 04109 Leipzig, Germany, and the Santa Fe Institute for the
Sciences of Complexity, Santa Fe, NM 87501, USA} \email{jost@mis.mpg.de}
\author{Xianqing Li-Jost}
\address{Max Planck Institute for Mathematics in the Sciences,  04103 Leipzig, Germany, and
School of Mathematics and Statistics, Hainan Normal University,
571158 Haikou, China} \email{xli-jost@mis.mpg.de}

\maketitle

\begin{abstract}
We study the mathematical structures and relations among some quantities in the theory of quantum entanglement,
such as separability, weak Schmidt decompositions, Hadamard matrices etc..
We provide an operational method to identify the Schmidt-correlated
states by using weak Schmidt decomposition. We show that a mixed state is Schmidt-correlated
if and only if its spectral decomposition consists of a set of pure eigenstates
which can be simultaneously diagonalized in weak Schmidt decomposition,
i.e. allowing for complex-valued diagonal entries. For such states, the separability
is reduced to the orthogonality conditions of the vectors consisting of diagonal
entries associated to the eigenstates, which is surprisingly related to the so-called complex Hadamard matrices.
Using the Hadamard matrices, we provide a variety of generalized maximal entangled Bell bases.
\end{abstract}

\medskip

\section{Introduction}
As one of the most striking features of quantum systems, quantum
entanglement \cite{HorodeckiHHH} plays crucial roles in quantum
information processing \cite{nielsen} such as quantum computation,
quantum teleportation, dense coding, quantum cryptographic schemes,
quantum radar, entanglement swapping and remote states preparation.
Nevertheless, many significant open problems in characterizing
the entanglement of quantum systems still remain open.

%The mathematical formulation of the problem of quantum entanglement
%leads us into the field of tensor geometry \cite{Landsberg}. One
%considers a finite dimensional Hilbert
%space $\mathcal{H}=\mathcal{H}_A\otimes\mathcal{H}_B$ that is
%expressed as a tensor product. Every state in $\mathcal{H}$ can then
%be written as a sum of products of states in $\mathcal{H}_A$ and
%$\mathcal{H}_B$. The question is how to characterize
%those states in $\mathcal{H}$ that themselves can be written as
%products of states, and not only as sums of such products. Such states
%are called separable, the others entangled. I.e., a (pure) state is separable
%iff it can be written as
%$\Ket{\psi}=\Ket{\psi_A}\otimes\Ket{\psi_B}$ with   $\Ket{\psi_A}\in\BH_A$ and
%$\Ket{\psi_B}\in\BH_B$. More generally and interestingly, one considers
%mixed states, that is, convex combinations of pure states of the form
%$\rho=\sum_{k=1}^Kp_k\Ket{\psi_k}\Bra{\psi_k}$. Such a $\rho$ then is
%called  \emph{separable} if it can be expressed as a convex combination of
%\emph{separable} pure states, else \emph{entangled} \cite{Werner}.

Let $\mathcal{H}=\mathcal{H}_A\otimes\mathcal{H}_B\simeq
\C^n\otimes \C^n$ be a bipartite composite system.
The mathematical problem consists in deriving separability
criteria of mixed quantum states, and more generally, in quantifying their
degree of entanglement. Such basic  problems in the theory of
quantum entanglement turn out to be surprisingly difficult,
see e.g. \cite{BZ} for a monographic treatment. Reasons for the
difficulty are that the representations of a mixed state $\rho$ as a (statistical) ensemble
of pure states are not unique, and that the pure states in
a representation in general cannot be simultaneously
diagonalized in terms of suitable bases of $\mathcal{H}_A$ and $\mathcal{H}_B$ (Schmidt decomposition).

One of the main idea of the paper is that one
can weaken the requirement of simultaneous Schmidt diagonalization to a more
general complex version, that is, the generalized Schmidt coefficients are allowed to be
complex-valued, see Section 2. We call it the \emph{weak Schmidt
  decomposition}. It was first introduced by \cite{HiroshimaHayashi} to study quantum states.
In our applications, this concept will invoke the Hadamard matrices, a class of matrices that already have received
considerable mathematical attention, though some basic
problems still remain unresolved, see \cite{TadejZyczkowski} for a survey.

We first recall some basic concepts in the theory of quantum entanglement.
The entanglement of formation
\cite{BDSW,Horo-Bruss-Plenioreviews1,Horo-Bruss-Plenioreviews2,Horo-Bruss-Plenioreviews3}
and concurrence \cite{Wootters98,s7,s8,s9} are among the  important
measures to quantify the entanglement. However, due to the
extremizations involved in the computation, only a few analytic
formulae have been obtained for states like two-qubit ones
\cite{Wootters98,Hill-Wootters97}, isotropic states \cite{Terhal}
and Werner states \cite{Vollbrecht}. Instead of analytic formulas,
some progress has been made toward the lower and upper bounds
\cite{Chen1,Chen2,MintertKB,deVicente,LiX,LiMing,CH,zhxn,63,135}.

%Even for pure states, our understanding remains limited as the
%complexity of entanglement scales exponentially with the number of
%particles. The problem of classifying multipartite pure states under
%local unitary equivalence has been solved in
%\cite{mqubit1,mqubit2,bliu}. Recently, in terms of entanglement
%polytopes the authors of \cite{sci} dealt with the classification of
%multipartite pure states under local operations and classical
%communication.

A mixed state $\rho$ is called \emph{Schmidt-correlated}, or called \emph{maximally correlated}
\cite{Rains99,Rains01,HiroshimaHayashi}, if there exists an orthonormal basis, $\{\Ket{e_jf_l}\}_{j,l=1}^n,$ of $\BH$ such that
\begin{equation}\label{e:def Schmidt correlated}
\rho=\sum_{j,l=1}^n C_{jl}\Ket{e_jf_j}\Bra{e_lf_l}.
\end{equation}
It is called maximally correlated since for
any classical measurement on $\BH_A$ or $\BH_B$, Alice and Bob will always
obtain the same result. It turns out that this class of states exhibits many excellent properties \cite{Rains99,VedralPleniio98,VirmaniSacchiPlenioMarkham01,ChenYang,HiroshimaHayashi,KhasinKosloffSteinitz07,ZhaoFeiWang}.
However, given a general state $\rho$ written in the computational basis,
any operational method to decide whether it is Schmidt-correlated is still missing in the literature.
In this paper, we show that to decide a Schmidt-correlated state it suffices to check whether its
spectral decomposition consists of pure eigenstates which can be simultaneously
diagonalized in weak Schmidt decomposition, see Theorem \ref{thm:schmidt correlated}.
Although the spectral decomposition may not be unique in the case that it possesses
eigenvalues of high multiplicity and it is very possible that the property of simultaneously
diagonalization in weak Schmidt decomposition strongly depends on the choice of the ensembles
(or eigenstates), our theorem indicates that it is sufficient to verify only one of the ensembles.
In fact, one can derive that all ensembles of a mixed state can be
simultaneously diagonalized in weak Schmidt decomposition if and only if one of them
can be, for which we present a direct proof by Schr\"odinger's mixture theorem (Theorem 8.2 in \cite{BZ}), see \cite{ZhaoFeiWang} for an alternative proof. For the criteria of this
simultaneous diagonalization it boils down to the standard matrix theory, see \cite{Wie,Gib,HiroshimaHayashi}.
In this way, we provide an operational method to solve the problem.

Generally, it is of significance to find a complete basis of maximal entanglement, e.g.
the Bell states for qubits. In this paper, we use complex Hadamard matrices to introduce a wide class
of bases consisting of generalized Bell states of maximal entanglement, which contains the
well-known Weyl operator basis \cite{BBCJPW,Na,BHN06,BHN07,BK,HiroshimaHayashi,LiuBaiGeJing13} as a special case.

The paper is organized as follows. In section~\ref{s:weak schmidt},
we introduce the main concept of the paper, simultaneous
diagonalizations in weak singular value decomposition (weak Schmidt decomposition).
In Section~\ref{s:schmidt correlated}, we prove our main result, Theorem \ref{thm:schmidt correlated},
and use it to identify all Schmidt-correlated states. Section~\ref{s: entanglement of
convex combinations} is devoted to the separability criteria of Schmidt-correlated states.
In section~\ref{s:generalized bell basis with
complex Hadamard matrices}, we explore the deep connections among
the separability criterion, Hadamard matrices and generalized Bell
bases. Conclusion and remarks are given in the last section.

\section{Weak Schmidt decomposition}\label{s:weak schmidt}
Let $\mathcal{H}=\mathcal{H}_A\otimes\mathcal{H}_B\simeq
\C^n\otimes \C^n$ be a bipartite composite system and $\{\Ket{jl}\}_{j,l=1}^n$ the
computational basis of $\BH.$ In this basis, any pure state can be written as
$\Ket{\psi}=\sum_{j,l=1}^na_{jl}\Ket{jl}$ which associates with a matrix $A=(a_{jl})_{n\times n}.$
The Schmidt decomposition asserts that there exists
an orthonormal basis of $\BH,$ $\{\Ket{e_jf_l}\}_{j,l=1}^n,$ such that
$\Ket{\psi}=\sum_{j=1}^n{\sqrt{\lambda_j}}\Ket{e_j}\Ket{f_j}$ where
$\lambda_j\geq 0$ and $\sum_{j=1}^n\lambda_j=1.$ This follows from the singular value decomposition, SVD in short, of the matrix
$A,$ i.e. there exist $n\times n$ unitary matrices $U$ and $V$ such that
\[UAV^{t}=\begin{pmatrix}
\sqrt{\lambda_1}&&&\\
&\sqrt{\lambda_2}&&\\
&&\ddots&\\
&&&\sqrt{\lambda_n}
\end{pmatrix}_{n\times n},\] where $t$ denotes the transpose.

A pure state $\Ket{\psi}$ is called \emph{separable} if it is a
product state, i.e.
$\Ket{\psi}=\Ket{\psi_1}\otimes\Ket{\psi_2},\ \Ket{\psi_1}\in\BH_A,\ \Ket{\psi_2}\in\BH_B.$ A \emph{mixed state} $\rho$ is a statistical ensemble of pure
  states, denoted by $\{p_k,\Ket{\psi_k}\}_{i=1}^K$ with $p_k> 0$
  and $\sum_{k=1}^Kp_k=1$, that is, \begin{equation}\label{rho1}\rho:=\sum_{k=1}^Kp_k\Ket{\psi_k}\Bra{\psi_k}.\end{equation}
 A mixed state
$\rho$ is \emph{separable} if it can be expressed as a convex combination of
separable pure states, i.e., there exists an ensemble of separable pure states, otherwise it is called entangled \cite{Werner}. In general, it
is difficult to decide whether a given mixed state is separable or
not, because the ensembles of a mixed state are generically non-unique.
There are neither operational sufficient and necessary criteria for judging the
separability in general, nor analytical formulae for entanglement of
formation or concurrence for arbitrary mixed states.  However, if one could
carry out the Schmidt decomposition simultaneously for all the pure
states in an ensemble of a mixed state, then the calculation would
be much easier. Therefore the question is under which conditions one can
diagonalize a set of pure states in SVD simultaneously. This is
answered by Wiegmann's theorem \cite{Wie}: A
set of matrices $\{A_k\}_{k=1}^K$ can be simultaneously diagonalized
in SVD iff for any $1\leq i,j\leq K$
$$A_i A^{\dag}_j = A_j A^{\dag}_i\ \mathrm{and}\ A_i^{\dag} A_j = A_j^{\dag} A_i.$$

However, this demand is too strong for our purposes, as is already evident from the following
example,
 \[\Ket{\psi_1}\sim A_1=\begin{pmatrix}
 1&&\\
 &\omega&\\
 &&\omega^2
 \end{pmatrix},\ \ \  \Ket{\psi_2}\sim A_2=\begin{pmatrix}
 1&&\\
 &\omega^2&\\
 &&\omega
 \end{pmatrix}, \ \]
 where $\omega=\frac{1}{2}+i\frac{\sqrt{3}}{2}.$ Direct calculation
 shows that $A_1$ and $A_2$ cannot be transformed to SVD simultaneously, although they are already in complex diagonal form.

Our idea is to investigate when the pure states of an ensemble of
a mixed state can be simultaneously diagonalized in complex-valued
form. Namely, we consider the more general case (than SVD) in which we allow
complex-valued entries for the diagonal matrices. We say that $\{\Ket{\psi_k}\}_{k=1}^K$ can be
\emph{simultaneously diagonalized in weak SVD} if there exist
$n\times n$ unitary matrices $U$ and $V$ such that $UA_kV^{t}$
($1\leq k\leq K$) are complex-valued diagonal matrices, where $A_k$ are the matrix
representations of $\Ket{\psi_k}.$ This kind of diagonalization can be regarded as a
``\emph{weak Schmidt decomposition}."
A mixed state $\rho$ is called \emph{simultaneously diagonalizable in weak SVD} if there exists an ensemble $\{p_i,\Ket{\psi_i}\}$ of $\rho$ such that $\{\Ket{\psi_i}\}$ can be simultaneously diagonalized in weak SVD. The previous example shows that this simultaneous
diagonalization is really weaker than classical
simultaneous diagonalization in SVD.

By matrix theory, see Wiegmann \cite{Wie} and Gibson \cite{Gib}, the
set of matrices $\{A_k\}_{k=1}^K$ can be simultaneously diagonalized in weak SVD
if and only if
\begin{equation}\label{e:simu 1}
A_jA_k^{\dagger}A_l=A_lA_k^{\dagger}A_j,~~~~\forall
1\leq j,k,l\leq K,
\end{equation}
or if and only if $A_k^{\dagger}A_l$ is normal and
\begin{equation}\label{e:simu 2}
A_jA_k^{\dagger}A_kA_l^{\dagger}=A_kA_l^{\dagger}A_jA_k^{\dagger},~~~~\forall
1\leq j,k,l\leq K.
\end{equation}
This simultaneous diagonalization
was first introduced to study quantum states by \cite{HiroshimaHayashi}.

The following example (whose general nature will become apparent
  in Section \ref{s:generalized bell basis with complex Hadamard
matrices})  shows the advantage of this generalized
diagonalization. Consider a $3\times 3$ mixed state, \begin{equation*}\label{p1}
\rho:=\sum_{k=1}^3 p_k\Ket{\psi_k}\Bra{\psi_k}, \end{equation*} where
$$
\Ket{\psi_1} =\frac{1}{3} (1,1,1|1,\omega,\omega^2| 1,\omega^2,
\omega)^t ,
$$
$$
\Ket{\psi_2}=\frac{1}{3} (1,\omega,\omega^2| 1,\omega^2, \omega|
1,1,1)^t
$$
and
$$ \Ket{\psi_3}= \frac{1}{3}(1,\omega^2,\omega| 1,1,1|
1,\omega, \omega^2)^t,
$$
$p_k> 0$, $1\leq k\leq 3$ and $\sum_{k=1}^3 p_k=1$. While $\{\Ket{\psi_i}\}_{i=1}^3$ have the same singular values, they cannot be
simultaneously diagonalized in SVD in the classical sense. However, they can be
simultaneously diagonalized in weak SVD through
$$
U= \frac{1}{\sqrt{3}}
\begin{pmatrix}
1&1&1\\
1&\omega&\omega^2\\
1&\omega^2&\omega\
\end{pmatrix}^{\dag},\ \ \ V=I,
$$
where $I$ stands for the identity matrix. Corresponding to this
simultaneous diagonalization, one has \begin{equation*}\label{p2}
\rho:=\sum_{k=1}^3 p_k\Ket{\phi_k}\Bra{\phi_k} , \end{equation*} where
$$
\Ket{\phi_1} =\frac{1}{\sqrt{3}} (1,0,0| 0,1,0| 0,0,1)^t,
$$
$$
\Ket{\phi_2}=\frac{1}{\sqrt{3}} ( 1,0,0| 0,\omega, 0|
0,0,\omega^2)^t,
$$
$$
\Ket{\phi_3}= \frac{1}{\sqrt{3}}(1,0,0| 0,\omega^2, 0|
0,0,\omega)^t.
$$
Obviously this ensemble has a much clearer internal structure than the previous one.
We shall discuss such states in detail in next sections.

%%%%%%%%%%%%%%%%%%%%%%%%%%%%%%%%%%%%%%%%%%%%%%%%%%%%%%%%%%%
\section{Schmidt-correlated states and simultaneous diagonalization in weak SVD}\label{s:schmidt correlated}
Schmidt-correlated states have proven to be quite useful, see \cite{Rains99,VedralPleniio98,VirmaniSacchiPlenioMarkham01,ChenYang,HiroshimaHayashi,KhasinKosloffSteinitz07,ZhaoFeiWang}.
However, given a general state $\rho$ written in the computational basis $\{\Ket{jl}\}_{1\leq j,l\leq n},$
$$\rho=\sum_{i,j,k,l}\rho_{ij,kl}\Ket{ij}\Bra{kl},$$
it is a hard problem to decide whether it is Schmidt-correlated. In this section, we provide an operational method to solve this problem.

%It was  shown by Chen and Yang
%\cite{ChenYang}  that ensembles of pure states
%which can be simultaneously diagonalized are equivalent to
%Schmidt-correlated states.  Here we present a short proof.
\begin{thm}\label{thm:schmidt correlated}For a mixed state $\rho,$ the following are equivalent:
\begin{enumerate}[(a)]
\item $\rho$ is Schmidt-correlated.
\item $\rho$ is simultaneously diagonalizable in weak SVD, i.e. there exists an ensemble of $\rho,$ $\{p_k,\Ket{\psi_k}\}_{k=1}^K$, such that $\{\Ket{\psi_k}\}$ is simultaneously diagonalized in weak SVD.
\item For all ensembles of $\rho,$ $\{q_s,\Ket{\phi_s}\}_{s=1}^S$, $\{\Ket{\phi_s}\}$ are simultaneously diagonalized in weak SVD.
\end{enumerate}
\end{thm}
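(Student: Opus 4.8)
The plan is to prove the cyclic chain of implications (a) $\Rightarrow$ (c) $\Rightarrow$ (b) $\Rightarrow$ (a), which together with the transparent equivalence (a) $\Leftrightarrow$ (b) establishes the theorem. The implication (c) $\Rightarrow$ (b) is immediate: every mixed state admits at least one ensemble (for instance its spectral decomposition), so the universal assertion in (c) in particular supplies the existential one in (b).

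For (b) $\Rightarrow$ (a), I would begin with an ensemble $\{p_k,\Ket{\psi_k}\}$ and unitaries $U,V$ such that each $UA_kV^{t}=D_k$ is a complex diagonal matrix. Inverting gives $A_k=U^{\dagger}D_k\overline{V}$, and reading off entries shows $\Ket{\psi_k}=\sum_m d^{(k)}_m\Ket{e_m f_m}$, where $\Ket{e_m}=\sum_j\overline{U_{mj}}\Ket{j}$ and $\Ket{f_m}=\sum_l\overline{V_{ml}}\Ket{l}$ are orthonormal bases of $\mathcal{H}_A$ and $\mathcal{H}_B$ (orthonormality follows from $UU^{\dagger}=VV^{\dagger}=I$). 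Thus every pure state of the ensemble is ``diagonal'' in the common product basis $\{\Ket{e_m f_m}\}$; substituting into $\rho=\sum_k p_k\Ket{\psi_k}\Bra{\psi_k}$ and collecting terms yields $\rho=\sum_{m,m'}C_{mm'}\Ket{e_m f_m}\Bra{e_{m'}f_{m'}}$ with $C_{mm'}=\sum_k p_k d^{(k)}_m\overline{d^{(k)}_{m'}}$, which is exactly the Schmidt-correlated form. Running the same bookkeeping in reverse gives (a) $\Rightarrow$ (b): the correlation matrix $C$ is Hermitian, positive semidefinite, and of unit trace (it is the matrix of $\rho$ on the subspace $\mathcal{D}=\mathrm{span}\{\Ket{e_j f_j}\}$), so diagonalizing $C=\sum_m\mu_m\Ket{c_m}\Bra{c_m}$ produces eigenstates $\Ket{\psi_m}=\sum_j(c_m)_j\Ket{e_j f_j}$ whose matrix representations in the bases $\{e_j\},\{f_l\}$ are the diagonal matrices $\mathrm{diag}((c_m)_1,\dots,(c_m)_n)$, i.e. an ensemble simultaneously diagonalized in weak SVD.

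The heart of the matter is (a) $\Rightarrow$ (c), and this is where Schr\"odinger's mixture theorem enters. From the preceding paragraph the Schmidt-correlated $\rho$ has spectral decomposition $\rho=\sum_m\mu_m\Ket{\chi_m}\Bra{\chi_m}$ whose eigenstates $\Ket{\chi_m}$ all lie in $\mathcal{D}$ and are diagonal in the common product basis. Given an arbitrary ensemble $\{q_s,\Ket{\phi_s}\}$, Schr\"odinger's theorem expresses the unnormalized vectors as $\sqrt{q_s}\Ket{\phi_s}=\sum_m W_{sm}\sqrt{\mu_m}\Ket{\chi_m}$ for a matrix $W$ with orthonormal columns. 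Since the diagonal states form a linear subspace and each $\Ket{\chi_m}$ is diagonal, each $\Ket{\phi_s}$ is again a diagonal combination $\sum_m b_{sm}\Ket{e_m f_m}$, whose matrix representation in $\{e_j\},\{f_l\}$ is diagonal. Hence the whole family $\{\Ket{\phi_s}\}$ is simultaneously diagonalized in weak SVD by the same $U,V$, giving (c). Equivalently, one may simply note that every vector occurring in any ensemble lies in the support of $\rho$, which is contained in $\mathcal{D}$.

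The main obstacle I anticipate is the careful application of Schr\"odinger's mixture theorem in (a) $\Rightarrow$ (c): one must track the normalization factors $\sqrt{q_s}$ and $\sqrt{\mu_m}$ and accommodate the case where the arbitrary ensemble has more members than the rank of $\rho$, so that $W$ is rectangular with orthonormal columns rather than square unitary. The surrounding steps—verifying orthonormality of $\{e_m\},\{f_m\}$ from unitarity of $U,V$, and checking that $C$ inherits positivity and unit trace from $\rho$—are routine. The conceptual point organizing the entire proof is that ``being diagonal in the fixed product basis $\{e_j\},\{f_l\}$'' is a linear condition, hence preserved under the linear recombination of eigenstates guaranteed by Schr\"odinger's theorem.
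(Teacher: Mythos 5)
Your proposal is correct and follows essentially the same route as the paper: (a)$\Rightarrow$(b) by diagonalizing the positive semidefinite, unit-trace correlation matrix $C$, (b)$\Rightarrow$(a) by direct computation in the common product basis, (c)$\Rightarrow$(b) trivially, and (a)$\Rightarrow$(c) via Schr\"odinger's mixture theorem combined with the observation that diagonality in the fixed product basis is a linear condition preserved under the recombination of eigenstates. Your explicit attention to the rectangular mixing matrix and the normalization factors is a careful touch the paper glosses over, but it does not change the argument.
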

\begin{proof}
$(a)\Longrightarrow (b)$: Let $\rho$ be of the form
\eqref{e:def Schmidt correlated}. One can show that the matrix
$C=(C_{jl})_{n\times n}$ is positive semidefinite and
has trace $1$.  Hence the spectral theorem implies that
$$C_{jl}=\sum_{k=1}^n\lambda_kv_k^j(v_k^l)^*,\ \ \ 1\leq j,l\leq n,$$
where
$\lambda_k$ are the eigenvalues of $C$ satisfying $\lambda_k\geq 0$
and $\sum_{k=1}^n\lambda_k=1,$ and
$(v_k^1,v_k^2,\cdots,v_k^n)^t$ are the normalized eigenvectors
pertaining to $\lambda_k.$ This yields
$$\rho=\sum_{k=1}^n\lambda_k\left(\sum_{j=1}^nv_k^j\Ket{e_jf_j}\right)\left(\sum_{l=1}^n(v_k^l)^*\Bra{e_lf_l}\right).$$
By setting $p_k=\lambda_k$ and
$\Ket{\psi_k}=\sum_{j=1}^nv_k^j\Ket{e_jf_j},$ we prove $(b).$

$(b)\Longrightarrow(a)$: This follows from direct computation.

$(c)\Longrightarrow (b)$: This is trivial.

$(a)\Longrightarrow (c)$:
This follows directly from Schr\"odinger's mixture theorem, Theorem 8.2 in \cite{BZ}. (One can also show this by an argument in \cite{ZhaoFeiWang}.)  By $``(a)\Longrightarrow (b)"$ above, the ensemble of eigenstates of $\rho=\sum_{k=1}^n\lambda_k\Ket{\psi_k}\Bra{\psi_k}$ is simultaneously diagonalizable. Schr\"odinger's mixture theorem implies that for any ensemble of $\rho,$ $\{q_s,\Ket{\phi_s}\}_{s=1}^S,$ there exists an $S\times S$ unitary matrix $U$ such that
$$\Ket{\phi_s}=\frac{1}{\sqrt{q_s}}\sum_{k=1}^nU_{sk}\sqrt{\lambda_k}\Ket{\psi_k}.$$ Since $\{\Ket{\psi_k}\}_{k=1}^n$ is simultaneously diagonalizable, so is $\{\Ket{\phi_s}\}_{s=1}^S.$
\end{proof}

This theorem suggests an operational method to check whether a mixed state $\rho$ is Schmidt-correlated.
First at all, we write $\rho$ in the spectral decomposition, i.e.
$\rho=\sum_{k=1}^K\lambda_k\Ket{\psi_k}\Bra{\psi_k}$ where $\{\lambda_k\}_{k=1}^K$,
are eigenvalues of $\rho$ and $\{\Ket{\psi_k}\}_{k=1}^K$ are the corresponding eigenstates.
Although the spectral decomposition may not be unique, e.g. the eigenvalues has high multiplicity,
by our result it suffices to check whether a particular ensemble $\{\Ket{\psi_k}\}$ is
simultaneously diagonalized in weak SVD. As we know, this reduces to the criteria of the
simultaneous diagonalization in weak SVD by Wiegmann \cite{Wie} and Gibson \cite{Gib},
see \eqref{e:simu 1} or \eqref{e:simu 2}. Furthermore, one can use the process of simultaneous
diagonalization to calculate the basis $\{\Ket{e_jf_l}\}$ which diagonalizes $\{\Ket{\psi_k}\}.$

%%%%%%%%%%%%%%%%%%%%%%%%%%%%%%%%%%%%%%%%%%%%%%%%%%%%%%%%%

\section{Separability of convex combinations of simultaneously diagonalizable states}\label{s: entanglement of convex combinations}
While the problem of separability of a general mixed state is hard, for Schmidt-correlated states,
i.e. simultaneously diagonalizable in weak SVD states, the criteria of separability is quite
simple once we know the basis $\{\Ket{e_jf_l}\}$ which diagonalizes the state.
Let $\rho$ be a mixed state of ensemble $\{p_k,\Ket{\psi_k}\}_{k=1}^K,$
\begin{equation}\label{rho}
\rho:=\sum_{k=1}^Kp_k\Ket{\psi_k}\Bra{\psi_k},
\end{equation}
where
$\{\Ket{\psi_k}\}_{k=1}^K$ can be simultaneously diagonalized in a new orthonormal basis
$\{\Ket{e_jf_l}\}_{1\leq j,l\leq n}$ such that
$$
\Ket{\psi_k}:=\sum_{j=1}^n\al_{j,k}\Ket{e_jf_j}
\sim A_k=\begin{pmatrix}
\al_{1,k}&&&\\
&\al_{2,k}&&\\
&&\ddots&\\
&&&\al_{n,k}
\end{pmatrix}_{n\times n}
$$
with diagonal entries $\al_{j,k}\in\C$ and
$\sum_{j=1}^n|\al_{j,k}|^2=1$. The
following theorem gives a necessary and sufficient condition for the
separability of $\rho.$

\begin{thm}\label{main}
Let $\rho$ be a Schmidt-correlated state written in the orthonormal basis $\{\Ket{e_jf_l}\}$
$$\rho=\sum_{j,l=1}^nC_{jl}\Ket{e_jf_j}\Bra{e_lf_l}.$$ Then the following are equivalent:
\begin{enumerate}[(a)]
\item $\rho$ is separable.
\item $\rho$ is PPT (positive partial transposition).
\item $C_{jl}=0, \ \ \ \forall j\neq l.$
\item For any ensemble $\{p_k,\Ket{\psi_k}\}_{k=1}^K$ of $\rho,$ denoted by $\Ket{\psi_k}:=\sum_{j=1}^n\al_{j,k}\Ket{e_jf_j},$ we have \begin{equation}\label{thetacon}\sum_{k=1}^Kp_k\al_{j,k}\al_{l,k}^*=0\end{equation} for all $j\neq l,$
$1\leq j,l\leq n$, where $*$ denotes complex conjugate.
\item For any ensemble of $\rho$, \eqref{thetacon} holds.
\end{enumerate}
\end{thm}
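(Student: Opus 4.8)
The plan is to prove the equivalences in a cycle, exploiting that the matrix $C=(C_{jl})$ is positive semidefinite with trace $1$ (as noted in the proof of Theorem \ref{thm:schmidt correlated}), and that the state $\rho$ lives in the ``diagonal'' subspace spanned by $\{\Ket{e_jf_j}\}$. The natural cycle is $(c)\Rightarrow(a)\Rightarrow(b)\Rightarrow(c)$, together with the separate equivalence $(c)\Leftrightarrow(d)\Leftrightarrow(e)$, which I will handle via the spectral decomposition of $C$.

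First I would prove $(c)\Rightarrow(a)$. If $C_{jl}=0$ for all $j\neq l$, then $\rho=\sum_j C_{jj}\Ket{e_jf_j}\Bra{e_jf_j}$ is manifestly a convex combination of the product states $\Ket{e_jf_j}\Bra{e_jf_j}$, hence separable. Next, $(a)\Rightarrow(b)$ is the standard fact that every separable state has positive partial transpose (the Peres criterion), which I may invoke directly. The crux is $(b)\Rightarrow(c)$: I would compute the partial transpose of $\rho$ explicitly. Writing the partial transpose on the $B$ system, the term $C_{jl}\Ket{e_jf_j}\Bra{e_lf_l}$ becomes $C_{jl}\Ket{e_jf_l}\Bra{e_lf_j}$. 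The diagonal terms ($j=l$) stay on the subspace spanned by $\{\Ket{e_jf_j}\}$, while each off-diagonal pair $(j,l)$ with $j\neq l$ contributes entries only in the two-dimensional subspace $\mathrm{span}\{\Ket{e_jf_l},\Ket{e_lf_j}\}$, giving a $2\times2$ block of the form $\begin{pmatrix}0 & C_{jl}\\ C_{lj} & 0\end{pmatrix}$ with eigenvalues $\pm|C_{jl}|$. Requiring this block to be positive semidefinite forces $C_{jl}=0$. This block-decomposition of the partial transpose is the step I expect to be the main obstacle, since one must carefully track which matrix elements land in which block and verify that no diagonal contribution interferes with the off-diagonal blocks.

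For the remaining equivalences, observe that $C_{jl}=\sum_k p_k\,\al_{j,k}\al_{l,k}^{*}$ for any ensemble $\{p_k,\Ket{\psi_k}\}$ with $\Ket{\psi_k}=\sum_j \al_{j,k}\Ket{e_jf_j}$; this is just the expansion of $\rho$ in the given basis. Hence condition \eqref{thetacon} for a fixed ensemble is literally the statement $C_{jl}=0$ for $j\neq l$, giving $(d)\Rightarrow(c)$ and $(e)\Rightarrow(c)$ immediately, while $(c)\Rightarrow(e)$ (and a fortiori $(c)\Rightarrow(d)$) holds because the quantity $\sum_k p_k\al_{j,k}\al_{l,k}^{*}=C_{jl}$ is \emph{ensemble-independent}: it equals the fixed matrix entry $C_{jl}$ regardless of which ensemble of $\rho$ one chooses. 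Thus \eqref{thetacon} either holds for all ensembles or for none, and $(c),(d),(e)$ collapse into one condition.

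The only subtlety in the last paragraph is justifying that $\sum_k p_k\al_{j,k}\al_{l,k}^{*}$ depends only on $\rho$ and not on the ensemble: this follows because computing $\Bra{e_jf_j}\rho\Ket{e_lf_l}$ from \eqref{rho} gives exactly $\sum_k p_k\al_{j,k}\al_{l,k}^{*}$, and the left-hand side is a fixed matrix element of $\rho=C_{jl}$. I would state this identity once and reuse it. With this in hand the whole theorem reduces to the single computation of the partial transpose, so I would organize the write-up to foreground $(b)\Rightarrow(c)$ and treat the rest as short remarks.
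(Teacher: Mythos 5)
Your proposal is correct and follows essentially the same route as the paper: $(c)\Rightarrow(a)$ directly, $(a)\Rightarrow(b)$ by Peres, $(b)\Rightarrow(c)$ by exhibiting the $2\times2$ block $\begin{pmatrix}0&C_{jl}\\ C_{lj}&0\end{pmatrix}$ of $\rho^{T_B}$ (the paper phrases this as a negative principal minor of order two rather than eigenvalues $\pm|C_{jl}|$, which is the same computation), and $(c)\Leftrightarrow(d)\Leftrightarrow(e)$ via the ensemble-independent identity $C_{jl}=\sum_k p_k\al_{j,k}\al_{l,k}^{*}$. No substantive differences.
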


\begin{proof}
Direct computation shows that $(c),(d),(e)$ are equivalent.

$(a)\Longrightarrow(b)$: This follows from a theorem of
Peres \cite{Peres}.

$(b)\Longrightarrow(c)$: We use a contradiction
argument. Suppose there exist $1\leq j_0< l_0\leq n$ such that
$C_{j_0l_0}\neq0,$ then we will show
that $\rho$ is NPPT (non-positive partial transposition). In fact, taking the partial transpose of the
second subsystem of $\rho$ yields
\begin{equation*}
\rho^{T_B}=\sum_{1\leq j,l\leq n}C_{jl}\Ket{e_jf_l}\Bra{e_lf_j}.
\end{equation*}
Since one of the principal minors of order two of $\rho^{T_B}$ reads
as \[\left|\begin{matrix}(\rho^{T_B})_{j_0l_0,j_0l_0}&(\rho^{T_B})_{j_0l_0,l_0j_0}\\
(\rho^{T_B})_{l_0j_0,j_0l_0}&(\rho^{T_B})_{l_0j_0,l_0j_0}\end{matrix}\right|=-\left|C_{j_0l_0}\right|^2<0,\]
we have $\rho^{T_B}\not\geq 0.$ Hence, $\rho$ is NPPT.

$(c)\Longrightarrow(a)$: Obviously, $\rho=\sum_{j}C_{jj}\Ket{e_jf_j}\Bra{e_jf_j}$ is
separable.
\end{proof}

\section{Generalized Bell bases and complex Hadamard matrices}\label{s:generalized bell basis with complex Hadamard
matrices}

In this section, we introduce  generalized Bell
bases and explore their connections with Hadamard matrices and the
separability criterion. This is motivated by the fact that the
computational basis $\Ket{jl}$, while usually used when
investigating entanglement, is not always the most suitable one for
our purposes. In particular, mixed states that have ensembles of pure
states which can be simultaneously diagonalized should rather be investigated in that
diagonalized form as this involves the least number of parameters. As the separability concerns the question to what extent a state behaves like a product
state, the non-separability is however
quantified by entanglement. Hence one should look at the length of the projection of a
given state onto the maximally entangled states. Therefore we introduce new kind of
bases consisting of maximal entangled states, called generalized Bell bases
(see \eqref{e:Bell basis} below), which contains the specific Bell-states for the case of qubit pairs.

% To motivate the investigations, we note that
% till now the research in entanglement is constrained more or less in
% the standard tensor basis $\Ket{i\otimes l}.$ But if we study the
% states which are ensembles of pure states in diagonal form, it is
% natural to discuss directly the properties of the states because
% this form is the simplest with least parameters. The other reason
% is, the separability tells whether a state behaves like a product
% state, one should use tensor basis of course (the entanglement tells
% then how far it is to be a separable state). But to study the
% entanglement, one can project the state to maximally entangled
% directions to see how far it is to be maximal entangled and how much
% the entanglement occurs.

First we want to find nontrivial  solutions of the system of
equations \eqref{thetacon}. We restrict to the following class of
states.
\begin{assumption}\label{a: assumptions for rho} The Schmidt-correlated state $\rho$
written in \eqref{rho} satisfies
\begin{enumerate}[(a)]\item $K=n,$ \item $|\al_{j,k}|=a_j\neq 0$ for any $1\leq j,k\leq n,$
\item $p_k=1/n$ for all $1\leq k\leq n.$
\end{enumerate}
\end{assumption}
The normalization conditions require that $\sum_{j=1}^n a_j^2=1$. In polar
coordinates, one can write $\al_{j,k}=a_je^{i\theta_{j,k}}.$ In
fact, we will figure out soon that only the phases $(e^{\theta_{j,k}})$
matter for the separability of $\rho$ in this case. In characterizing all the solutions of
\eqref{thetacon}, the key observation is that the orthogonality
conditions of Theorem \ref{main} translates into the conditions for a
complex Hadamard matrix.
Here, an $n\times n$ complex matrix $H$ is called a \emph{complex Hadamard
matrix}, see \cite{TadejZyczkowski,Butson1,Butson2}, if
$|H_{j,k}|=1$ for all $1\leq j,k\leq n$ and $HH^\dagger=n I.$
Equivalently, $\frac{1}{\sqrt{n}}H$ is a unitary matrix.

\begin{thm}
Let $\rho$ be a Schmidt-correlated state which is an ensemble of
$\{p_k,\Ket{\psi_k}\}_{k=1}^K$ with $\Ket{\psi_k}=\sum_{j=1}^K\al_{j,k}\Ket{e_jf_j}$
for an orthonormal basis $\Ket{e_jf_l}$ satisfying Assumption \ref{a: assumptions for
rho}. Then $\rho$ is separable if and only if
$(e^{i\theta_{j,k}})_{n\times n}$ is a complex Hadamard matrix, where $e^{i\theta_{j,k}}$ are the phase factor of $\al_{j,k}$.
\end{thm}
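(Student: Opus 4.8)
The plan is to reduce everything to Theorem \ref{main} and then recognize the resulting orthogonality conditions as the defining property of a complex Hadamard matrix. Since $\rho$ is Schmidt-correlated, Theorem \ref{main} applies, and its equivalence $(a)\Longleftrightarrow(d)$ tells us that $\rho$ is separable if and only if the given ensemble $\{p_k,\Ket{\psi_k}\}_{k=1}^n$ satisfies
\[
\sum_{k=1}^n p_k\,\al_{j,k}\al_{l,k}^*=0,\qquad\forall\, j\neq l.
\]
So the first step is simply to invoke this equivalence for the specific ensemble at hand.

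Next I would feed in the data of Assumption \ref{a: assumptions for rho}. Writing $\al_{j,k}=a_je^{i\theta_{j,k}}$ with $a_j\neq 0$ and using $p_k=1/n$, the left-hand side factors as
\[
\sum_{k=1}^n p_k\,\al_{j,k}\al_{l,k}^*
=\frac{a_j a_l}{n}\sum_{k=1}^n e^{i(\theta_{j,k}-\theta_{l,k})}.
\]
Because $a_j,a_l\neq 0$, this vanishes precisely when $\sum_{k=1}^n e^{i\theta_{j,k}}e^{-i\theta_{l,k}}=0$ for all $j\neq l$. This is the crucial observation: the moduli $a_j$ drop out entirely, so only the phases enter the separability criterion.

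I would then introduce the matrix $H=(H_{j,k})_{n\times n}$ with $H_{j,k}=e^{i\theta_{j,k}}$. By construction $|H_{j,k}|=1$ for all $j,k$, so the first half of the complex-Hadamard definition holds automatically. The separability condition just derived says exactly that the off-diagonal entries of $HH^\dagger$ vanish, since $(HH^\dagger)_{jl}=\sum_{k=1}^n H_{j,k}\overline{H_{l,k}}=\sum_{k=1}^n e^{i\theta_{j,k}}e^{-i\theta_{l,k}}$. Meanwhile the diagonal entries are forced to equal $n$, because $(HH^\dagger)_{jj}=\sum_{k=1}^n|H_{j,k}|^2=n$. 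Hence the orthogonality conditions hold if and only if $HH^\dagger=nI$, which together with $|H_{j,k}|=1$ is precisely the definition of a complex Hadamard matrix, completing the equivalence.

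Every step is a routine algebraic identification, so there is no genuine obstacle. The only point needing a little care is the normalization bookkeeping: one must notice that the diagonal of $HH^\dagger$ comes out as $n$ rather than as $\sum_j a_j^2=1$, which is exactly why the moduli $a_j$ are irrelevant to separability and only the phase matrix $(e^{i\theta_{j,k}})$ carries the Hadamard structure.
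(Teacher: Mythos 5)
Your proposal is correct and follows essentially the same route as the paper: both reduce separability to the orthogonality conditions \eqref{thetacon} via Theorem \ref{main} and then identify those conditions, after the moduli $a_j$ factor out, with the defining relation $HH^\dagger=nI$ of a complex Hadamard matrix. Your write-up simply makes explicit the algebraic steps the paper leaves as ``obvious.''
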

\begin{proof}
It is obvious that $|e^{i\theta_{j,k}}|=1$ for all $1\leq j,k\leq
n.$ And the system of equations \eqref{thetacon} is equivalent to
the orthogonality of the rows of $(e^{i\theta_{j,k}})_{n\times n}.$
The result follows from Theorem \ref{main}.
\end{proof}

\begin{rem}
The moduli of the diagonal entries, $a_j,$ in Assumption \ref{a:
assumptions for rho} can be chosen arbitrarily (as long as
the normalization condition $\sum_{j=1}^n a_j^2=1$ holds). This theorem
indicates that when the phase factors $(e^{i\theta_{j,k}})$
constitute a complex Hadamard matrix, no matter what $a_j$ are, the state $\rho$ becomes separable.
\end{rem}

By the theory of complex Hadamard matrices, there always exists a
solution to \eqref{thetacon} for any $n\in \N$ under Assumption
\ref{a: assumptions for rho}. For instance, for any $n$ the Fourier
matrix $F_n:=(H_{j,k})_{n\times
n}:=(e^{i(j-1)(k-1)\frac{2\pi}{n}})_{n\times n}$ is an Hadamard
matrix. Two Hadamard matrices, $H_1$ and $H_2$, are called
equivalent if there exist diagonal unitary matrices $D_1$ and $D_2$
and permutation matrices $P_1$ and $P_2$ such that
$$H_1=D_1P_1H_2P_2D_2.$$

For the classification up to this equivalence of Hadamard matrices for
$n\leq 5$,  see e.g. Tadej and Zyczkowski
\cite{TadejZyczkowski}. For instance, the Fourier matrix is the only
Hadamard matrix for $n=3$ and $n=5.$ For $n=4,$ there is a
continuous non-equivalent family of Hadamard matrices. For $n\geq
6,$ things become more complicated. In
particular, a complete classification of Hadamard matrices of order $6$ is still unknown.

For $n=3$, $\rho=1/n\sum_{k=1}^n\Ket{\psi_k}\Bra{\psi_k}$ is then
separable if and only if, up to the equivalence of Hadamard matrices,
$$(e^{i\theta_{j,k}})_{3\times 3}=\begin{pmatrix}
1&1&1\\
1&\omega&\omega^2\\
1&\omega^2&\omega
\end{pmatrix},
$$ where $\omega=\frac{1}{2}+i\frac{\sqrt{3}}{2},$ i.e.,
$$\Ket{\psi_1}\sim\begin{pmatrix}
a_1&&\\
&a_2&\\
&&a_3\\
\end{pmatrix}
,\Ket{\psi_2}\sim\begin{pmatrix}
a_1&&\\
&a_2\omega&\\
&&a_3\omega^2
\end{pmatrix},
\Ket{\psi_3}\sim\begin{pmatrix}
a_1&&\\
&a_2\omega^2&\\
&&a_3\omega\\
\end{pmatrix}.
$$

In higher dimensions, there are more freedom for the existence of Hadamard
matrices and the corresponding constructions of separable states.
Our result connnects the separability problem to the study of Hadamard
matrices.

We now construct generalized Bell bases for
$\mathcal{H}_A^n\times \mathcal{H}_B^n$ by Hadamard matrices. Let
\begin{eqnarray}\label{e:Bell basis}
\Ket{\psi_l^1}&=&\frac{1}{\sqrt{n}}\sum_{j=1}^ne^{i\phi_{j,l}^1}\Ket{j,j},\nonumber\\
\Ket{\psi_l^2}&=&\frac{1}{\sqrt{n}}\sum_{j=1}^ne^{i\phi_{j,l}^2}\Ket{j,j+1},\\
&&\cdots\cdots\nonumber\\
\Ket{\psi_l^n}&=&\frac{1}{\sqrt{n}}\sum_{j=1}^ne^{i\phi_{j,l}^n}\Ket{j,j+n-1},\nonumber
\end{eqnarray}
where we count $j+n-1 \text{ mod }n$, $1\leq l\leq n$, and $(e^{i\phi_{j,l}^s})_{j,l}$ is
a Hadamard matrix for any fixed $1\leq s\leq n$. One finds that $\left\{\Ket{\psi_{l}^k}\right\}_{1\leq
l,k\leq n}$ constitute an orthonormal basis of
$\mathcal{H}_A^n\times \mathcal{H}_B^n$. By using Hadamard matrices, we know that
$\ket{\psi_l^k}$, $1\leq l,k\leq n$, is maximally entangled. Therefore, any mixed state $\rho$
can be written as
$$\rho=\sum_{1\leq l,k,m,j\leq n}\rho_{lk,mj}\Ket{\psi_{l}^k}\Bra{\psi_{m}^j}.$$

When we choose $(e^{\phi_{j,l}^s})_{n\times n}=F_n$
in \eqref{e:Bell basis} for  $1\leq s\leq n$, where $F_n$ is the
Fourier matrix of order $n$, we  recover the well-known Weyl
operator basis. The Weyl operators have been introduced in the
context of quantum teleportation \cite{BBCJPW} and investigated
thoroughly in the literature (see e.g. \cite{Na,BHN06,BHN07,BK}).
Note that our generalized Bell bases possess more freedom, as we may choose any complex Hadamard matrices.
In higher dimensions, there exist plenty of complex Hadamard matrices, offering a
potential for new applications. Let us consider an example:

\begin{example}
For $n=4$, besides the Weyl basis there exist many other Bell bases. Let
$$(e^{\phi_{j,l}^s})_{4\times 4}=\begin{pmatrix}
1&1&1&1\\
1&ie^{ia_s}&-1&-ie^{ia_s}\\
1&-1&1&-1\\
1&-ie^{ia_s}&-1&ie^{ia_s}\\
\end{pmatrix},
$$
where $1\leq s\leq 4$ and $a_s\in \R.$ These are Hadamard matrices,
see $(65)$ in Section $5.4$ \cite{TadejZyczkowski}. Then for any $a_s\in \R,$ $1\leq s\leq 4$,  we have
the following Bell basis:
{\small $$\begin{pmatrix}
1&&&\\
&1&&\\
&&1&\\
&&&1\\
\end{pmatrix}
,\begin{pmatrix}
1&&&\\
&ie^{ia_1}&&\\
&&-1&\\
&&&-ie^{ia_1}\\
\end{pmatrix},
\begin{pmatrix}
1&&&\\
&-1&&\\
&&1&\\
&&&-1\\
\end{pmatrix},
\begin{pmatrix}
1&&&\\
&-ie^{ia_1}&&\\
&&-1&\\
&&&-ie^{ia_1}\\
\end{pmatrix};$$

$$
\begin{pmatrix}
&1&&\\
&&1&\\
&&&1\\
1&&&\\
\end{pmatrix}
,\begin{pmatrix}
&1&&\\
&&ie^{ia_2}&\\
&&&-1\\
-ie^{ia_2}&&&\\
\end{pmatrix},
\begin{pmatrix}
&1&&\\
&&-1&\\
&&&1\\
-1&&&\\
\end{pmatrix},
\begin{pmatrix}
&1&&\\
&&-ie^{ia_2}&\\
&&&-1\\
-ie^{ia_2}&&&\\
\end{pmatrix};$$

$$
\begin{pmatrix}
&&1&\\
&&&1\\
1&&&\\
&1&&\\
\end{pmatrix}
,\begin{pmatrix}
&&1&\\
&&&ie^{ia_3}\\
-1&&&\\
&-ie^{ia_3}&&\\
\end{pmatrix},
\begin{pmatrix}
&&1&\\
&&&-1\\
1&&&\\
&-1&&\\
\end{pmatrix},
\begin{pmatrix}
&&1&\\
&&&-ie^{ia_3}\\
-1&&&\\
&-ie^{ia_3}&&\\
\end{pmatrix};
$$

$$\begin{pmatrix}
&&&1\\
1&&&\\
&1&&\\
&&1&\\
\end{pmatrix}
,\begin{pmatrix}
&&&1\\
ie^{ia_4}&&\\
&-1&&\\
&&-ie^{ia_4}&\\
\end{pmatrix},
\begin{pmatrix}
&&&1\\
-1&&&\\
&1&&\\
&&-1&\\
\end{pmatrix},
\begin{pmatrix}
&&&1\\
-ie^{ia_4}&&&\\
&-1&&\\
&&-ie^{ia_4}&\\
\end{pmatrix}.
$$}
\end{example}

In \cite{FeiLi} the case of
$$\rho=\sum_{1\leq j,l\leq
n}\rho_{jl,jl}\Ket{\psi_{l}^j}\Bra{\psi_{l}^j}
$$
for $n=3$ has been investigated. In the
present paper, we have studied in detail the mixed state
\begin{equation}\label{eq1}\rho=\sum_{1\leq l\leq
n}\rho_{jl,jl}\Ket{\psi_{l}^j}\Bra{\psi_{l}^j}\end{equation} for
fixed $j$, $1\leq j\leq n,$ and arbitrary  dimension $n\geq 2.$

\section{Conclusion and remarks}

We have studied the weak Schmidt decomposition of quantum states in
which the generalized Schmidt coefficients are allowed to be
complex-valued. Using this concept, we can identify all so-called Schmidt-correlated states
by verifying the spectral decomposition of the states, hence provide an operational
method to the identification of Schmidt-correlated states. The separability of such states
has been translated into the orthogonality condition of the diagonal entries
of eigenstates of any ensemble. Moreover, we have introduced generalized Bell bases
and presented their connections with Hadamard matrices and separability
criteria. Our construction of generalized Bell
bases includes the well-known Weyl operator basis as a special case.
These mathematical structures and relations among the separability, weak Schmidt decompositions,
Hadamard matrices, generalized Bell bases etc. may help in the further characterization of
quantum entanglement.

% If in two-column mode, this environment will change to single-column format so that long equations can be %displayed.
% Use only when necessary.
%\begin{widetext}
%$$\mbox{put long equation here}$$
%\end{widetext}

% Figures should be put into the text as floats.
% Use the graphics or graphicx packages (distributed with LaTeX2e).
% See the LaTeX Graphics Companion by Michel Goosens, Sebastian Rahtz, and Frank Mittelbach for examples.
%
% Here is an example of the general form of a figure:
% Fill in the caption in the braces of the \caption{} command.
% Put the label that you will use with \ref{} command in the braces of the \label{} command.
%
% \begin{figure}
% \includegraphics{}%
% \caption{\label{}}%
% \end{figure}

% Tables may be be put in the text as floats.
% Here is an example of the general form of a table:
% Fill in the caption in the braces of the \caption{} command. Put the label
% that you will use with \ref{} command in the braces of the \label{} command.
% Insert the column specifiers (l, r, c, d, etc.) in the empty braces of the
% \begin{tabular}{} command.
%
% \begin{table}
% \caption{\label{} }
% \begin{tabular}{}
% \end{tabular}
% \end{table}

% If you have acknowledgments, this puts in the proper section head.
%\begin{acknowledgments}
% Put your acknowledgments here.
%\end{acknowledgments}

% Create the reference section using BibTeX:

\bibliography{Bell}
\bibliographystyle{plain}

\end{document}